\newtheorem{thm}{Theorem}
\newtheorem{prob}{Problem}
\renewcommand\( {\left(}
\renewcommand\) {\right)}
\def\cE{{\cal E}}
\def\cH{{\cal H}}
\def\cI{{\cal I}}
\def\cO{{\cal O}}
\def\cQ{{\cal Q}}
\def\cN{{\cal N}}
\def\cP{{\cal P}}
\begin{document}
\widetext


\title{Supersymmetry and Quantum Computation}
\author{P. Marcos Crichigno \vspace{0.2cm}}
\affiliation{Blackett Laboratory, Imperial College, 
Prince Consort Rd., London, SW7 2AZ, U.K. 
\vspace{0.1cm}}

\affiliation{
Institute for Theoretical Physics, University of Amsterdam,
Science Park 904, Postbus 94485, 1090 GL, Amsterdam, The Netherlands
} 


\begin{abstract}
The interplay between supersymmetry and classical and quantum computation is discussed. First, it is shown that the problem of computing the Witten index  of $\cN\leq 2$ quantum mechanical systems is $\#\P$-complete and therefore intractable. Then, the notions of supersymmetry in the space of qubits and supersymmetric quantum circuits are introduced and some of their properties discussed. In particular, it is shown that these define a nontrivial subclass of quantum algorithms with robustness properties typical of supersymmetric systems. Concrete examples, including the supersymmetric SYK model and fermion hard-core models are discussed. Some applications and open questions are suggested. 
\end{abstract}


\maketitle

\thispagestyle{fancy}
\rhead{Imperial/TP/2020/MC/01} 


\paragraph{Introduction.} 

Finding the ground states of a physical system is hard.  This is  true not only in a practical sense but also in a formal, computational, sense. A prototypical example is   that of  a classical spin-glass system for which the problem of finding the ground state is  $\NP$-hard \cite{Barahona_1982}. Similarly, finding the ground state of a quantum Hamiltonian is $\QMA$-hard \cite{KitaevBook,2004quant.ph..6180K}. This implies, in particular, that there is no efficient algorithm for finding ground states, assuming standard conjectures in computational complexity theory.  A number of deep connections between the theory of computational complexity and statistical mechanics systems have been pointed out in,  e.g., \cite{Kirkpatrick671,Fu_1986,NP_PhaseTransitions,aaronson2005npcomplete,CubittUndec}.

Supersymmetry is a symmetry relating bosonic and fermionic states of a system. Supersymmetric systems are often more amenable to analysis and various analytic and exact results are possible. Indeed, the ground states--or some of their properties--can be found analytically in various nontrivial supersymmetric systems. This raises the question of the computational complexity associated to the ground states of supersymmetric systems. Although many exact results are known in {\it specific} supersymmetric systems, we show  that the ground state problem for supersymmetric theories remains computationally hard.

Supersymmetry was first proposed as a possible symmetry of relativistic quantum field theory. However, its applications extend to a number of areas in mathematics including,  most famously, Morse theory, mirror symmetry, and generalized complex geometry. 

In this letter, we focus on {\it supersymmetric quantum mechanics} \cite{Nicolai:1976xp,Witten:1981nf,Witten:1982im,Witten:1982df} and bring the attention to the interplay between supersymmetry and the theory of quantum computation.  As we discuss, a natural setting for  incorporating supersymmetry in quantum computation is the fermionic model of quantum computation  \cite{2002AnPhy.298..210B,Ortiz:2000gc}. Defining  supersymmetry operators in the fermionic model, one can then map these to qubit space via a standard spin-$\tfrac12$ Jordan-Wigner transformation or its generalizations. Having defined the action of supersymmetry in the space of qubits, we then define the notion of \emph{supersymmetric quantum circuits}, and use these to design quantum algorithms associated to certain supersymmetric physical observables.

The intuition from the physics of supersymmetric systems is that this should define a nontrivial subclass of quantum algorithms, with advantageous properties (e.g., invariance under certain deformations) over non-supersymmetric ones, and which capture highly non-trivial problems of both physical as well as mathematical interest. As we discuss, this intuition bears out.

We emphasize that this definition of supersymmetry holds for any system of qubits and does not require  supersymmetry to be realized at a fundamental level in nature. In particular, any implementation of a quantum computer can be made supersymmetric in this sense. \\

\paragraph{$\cN=2$ quantum mechanics. }

The Hilbert space of any quantum mechanical theory can be decomposed as $\cH=\cH^{B}\oplus \cH^{F}$ where each factor refers to the subspace of bosonic and fermionic states. These are distinguished by the operator $(-1)^{F}$, acting as $+1$ on bosonic states and as $-1$ on fermionic states. By definition, in a theory with $\cN=2$ supersymmetry there exists a complex Grassmann  operator $\cQ$, sending states in $\cH^{B}$ into states in $\cH^{F}$ and vice versa, and satisfying the algebra \cite{Nicolai:1976xp,Witten:1981nf,Witten:1982im,Witten:1982df},
\begin{equation}\label{algebra}
H=\{\cQ,\cQ^{\dagger}\}\,,\quad  \cQ^{2}=(\cQ^{\dagger})^{2}=0\,,
\end{equation}
where $H$ is the Hamiltonian of the system and $\{(-1)^{F},\cQ\}=0$. One says an operator  is bosonic or fermionic if it commutes or anticommutes with $(-1)^{F}$, respectively. The supercharge is thus a fermionic operator and the Hamiltonian bosonic.  It follows directly from \eqref{algebra} that the spectrum of supersymmetric systems is positive semidefinite, $E\geq 0$, and that a state  $\ket{\Omega}$ has $E=0$ iff $\cQ\ket{\Omega}=\cQ^{\dagger}\ket{\Omega}=0$. All such states, which may be bosonic or fermionic, are called {\it supersymmetric ground states}. A crucial property of  states with $E>0$ is that they are paired: for every such bosonic state there is a corresponding fermionic state with the same energy. This is not necessarily the case for supersymmetric ground states. In fact, a quantity of particular interest is the Witten index, defined as the difference in the number of   bosonic and fermionic supersymmetric ground states \cite{Witten:1982df}:
\begin{equation}\label{WI}
\cI\equiv n_{E=0}^{B}-n_{E=0}^{F}=\text{Tr}_{\cH}\, (-1)^{F}\,,
\end{equation}
where in the last equality one uses the fact that  states with $E>0$ are paired and thus do not contribute to the trace. The Witten index gives a lower bound on the total number of supersymmetric ground states via the inequality $ n_{E=0}^{B}+n_{E=0}^{F}\geq \abs{\cI}$. In particular, if $\cI\neq 0$ the system must have supersymmetric ground states. 

Since $\cQ^{2}=0$, supersymmetry defines the $\Bbb Z_{2}$-graded complex of vector spaces,
\begin{equation}
C:\quad \cH^{F} \xrightarrow{\cQ}\cH^{B}\xrightarrow{\cQ}\cH^{F} \xrightarrow{\cQ}\cH^{B} \,,
\end{equation}
and the Euler characteristic of $C$ coincides with the Witten index. It is this topological nature of the Witten index that makes it a robust quantity and, in some situations, easily calculable. 

Computing general physical observables in supersymmetric systems can be as formidable a task as in non-supersymmetric systems. However, there are a subset of physical observables, ``supersymmetric observables,'' which have special properties and can often be computed exactly, the Witten index being an  example. These relate to an important set of operators called supersymmetric, or $\cQ$-closed, operators. A bosonic operator $\cO$ is $\cQ$-closed if
\begin{equation}
[\cQ,\cO]=0\,.
\end{equation}
Among these, $\cQ$-exact operators are defined  as those which can be written as $\cE=\{\cQ,\Psi'\}$, for some fermionic $\Psi'$. By nilpotency, all $\cQ$-exact operators are $\cQ$-closed.  Not all $\cQ$-closed operators, however, are necessarily $\cQ$-exact; whether or not this is the case is determined by the $\cQ$-cohomology of operators. Two supersymmetric operators $\cO$ and $\cO'$ are said to be in the same cohomology class if $\cO'=\cO+\cE$. The analogous definitions hold for fermionic $\cQ$-closed and exact operators, exchanging commutators and anticommutators. 

An important set of physical observables is given by the correlation function of supersymmetric operators in a supersymmetric ground state:
\begin{equation}\label{corr}
\langle \cO_{1}\cdots\cO_{n} \rangle_{\Omega}\equiv \bra{\Omega} \cO_{1}(t_{1})\cdots\cO_{n}(t_{n})\ket{\Omega}\,,
\end{equation}
where the $t_{i}$ are insertion points in Lorentzian time $t\in \Bbb R$, and can be expressed by a standard path integral. 

Another set of  observables is given by a refined or generalized Witten index \eqref{WI}, obtained by the insertion of supersymmetric operators into the trace:
\begin{equation}\label{indexdef}
Z_{\text{P}}[\cO_{1}\,,\cdots \,,\cO_{n}]\equiv \text{Tr}_{\cH}\, \left[(-1)^{F}\cO_{1}(\tau_{1})\cdots \cO_{n}(\tau_{n})\right]\,.
\end{equation}
This can be thought of as the insertion of operators in the Euclidean path integral of the theory, with periodic boundary conditions for fermions along a compactified Euclidean time direction $\tau=i t$. An important  property of the observables \eqref{corr} and \eqref{indexdef} is that they are invariant under exact deformations,
\begin{equation}\label{exactdef}
\cO_{k}\to \cO_{k}+\cE_{k}\,,
\end{equation}
as can be easily checked. For the former, this follows from properties of the supersymmetric ground state and, for the latter, from cyclicity of the trace.  Thus, these observables are sensitive only to the cohomology class of supersymmetric operators. These robustness properties will be relevant to our discussion of quantum computation below.  \\

\paragraph{Computational complexity of supersymmetric systems.} 
Let us briefly review relevant concepts of complexity theory  (see, e.g., \cite{sipser13,PapdimitriouBook}). The complexity class $\P$ is the class of decision problems (with a ``yes/no'' answer) which can be solved by a deterministic Turing machine in polynomial time. The class $\NP$ is the class of decision problems for which the problem instances which give ``yes''  can be \emph{checked} in polynomial time.  The complexity class $\#\P$ is the set of \emph{counting} problems associated to decision problems in $\NP$. For example, whereas the problem of deciding if a boolean formula has a satisfying instance is a problem in $\NP$, the problem of counting how many satisfying instances it has is a problem in $\#\P$. A problem {\sc H}  is said to be $\#\P$-hard  if it is at least as hard as any problem in $\#\P$ or, more precisely, if any  problem in $\#\P$ can be reduced to {\sc H} in polynomial time. A problem is said to be $\#\P$-complete if it is  $\#\P$-hard  and belongs to the class $\#\P$. 

The  Witten index \eqref{WI} can sometimes be computed exactly and with little dynamical information. In particular, supersymmetry ensures that the contribution of all  states with $E>0$ in the trace in \eqref{WI} cancel out and thus the index  can  be computed with no knowledge of  the supersymmetric ground states themselves. Furthermore, under certain conditions the index is invariant under small, supersymmetric, deformations of the system \cite{Witten:1982df} which can sometimes be  exploited to  bring the system to a weakly coupled point, where the Witten index can be efficiently computed in perturbation theory. Although this is often the case in \emph{specific} supersymmetric systems, we show next there can be no efficient algorithm for computing the Witten index for generic supersymmetric systems, assuming standard conjectures in computational complexity. 

For the purposes of studying the complexity of the problem we consider the Hilbert space $\cH$ to be spanned by a {\it subset} of $N$-bit strings $\ket{n_{1},\cdots,n_{N}}$, that this subset is determined by a polynomial number of constrains among the $n_{i}$, and that the supercharge is a $k$-local function on $\cH$. Then, one can prove the following Theorem:

\begin{thm}\label{thmWI}
Given a quantum mechanical system with a finite-dimensional Hilbert space $\cH\subseteq (\Bbb C^{2})^{\otimes N}$ and $\cN\leq 2$ supersymmetry,  specified by a polynomial number of constraints and a $6$-local supercharge $\cQ$, the problem of computing the Witten index is $\#\P$-complete.  
\end{thm}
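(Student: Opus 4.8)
The plan is to prove $\#\P$-completeness by establishing two things: membership in $\#\P$, and $\#\P$-hardness via a reduction from a known $\#\P$-complete counting problem. For membership, I would argue that the Witten index $\cI = \text{Tr}_{\cH}(-1)^{F}$ can be written as a signed count over the computational basis states. Since $H=\{\cQ,\cQ^{\dagger}\}$ is explicitly constructed from the given polynomial-size representation of $\cQ$, I would exhibit $\cI$ as a difference $n^{B}_{E=0}-n^{F}_{E=0}$ of quantities that are themselves in $\#\P$ (counting zero-energy bosonic and fermionic states), which places the index in the closure $\#\P - \#\P = \text{GapP}$. To land precisely in $\#\P$ rather than $\text{GapP}$, I would instead design the supercharge so that the index equals a genuine nonnegative count — e.g.\ by choosing a system where one of $n^{F}_{E=0}$ or the negative contributions vanish, so that $\cI$ directly counts satisfying configurations of an associated verifier that checks the projector $\cP$ and the kernel condition $\cQ\ket{s}=\cQ^{\dagger}\ket{s}=0$ in polynomial time.

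The core of the argument is the hardness reduction. The natural target is a canonical $\#\P$-complete problem such as \textsc{\#SAT} (counting satisfying assignments of a Boolean formula) or, more geometrically, counting independent sets in a graph. The strategy is to encode an arbitrary instance into a supersymmetric quantum mechanics of the form already set up in the excerpt: qubits for the $N$ Boolean variables, a projector $\cP$ enforcing the clause/hard-core constraints, and a supercharge $\cQ$ built from fermionic creation/annihilation operators (via Jordan-Wigner) so that the zero-energy cohomology is in bijection with the objects being counted. The fermion hard-core model mentioned in the abstract is the tailor-made vehicle here: for a suitable nilpotent $\cQ$ of the schematic form $\cQ=\sum_i P_i\, c_i^{\dagger}$ (with $P_i$ projectors enforcing exclusion), the supersymmetric ground states are known to compute topological invariants of the independence complex of the underlying graph, so the Witten index reduces to a signed count of independent sets. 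Choosing the graph so that this signed count collapses to the actual number of independent sets (e.g.\ a graph whose independence complex has homology concentrated in a single degree) converts the \#\P-hard counting problem into a Witten-index computation.

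The main obstacle I expect is controlling the \emph{signs}: the Witten index is an alternating sum, so a generic reduction yields $\cI = \sum_k (-1)^k \dim H_k$ of some complex rather than a raw count, and $\#\P$-hardness of a signed (GapP-style) quantity does not immediately transfer to the unsigned counting problem. I would resolve this either by (i) engineering the projector $\cP$ and the grading so that only one cohomological degree survives, making the index equal $\pm$ the desired count, or (ii) padding the construction with ancillary qubits and a deformation (using the exact-deformation invariance \eqref{exactdef}) that shifts unwanted contributions to paired excited states without changing the index. A clean route is to reduce from a problem whose answer is already manifestly a single homology dimension — for instance a monotone \#SAT variant or the permanent encoded through a hard-core model on a bipartite graph — so that the alternating sum has no cancellation to fight.

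Finally, I would verify that the reduction is polynomial-time: the projector $\cP$ and the operator $\cQ$ must be specified by a description of size polynomial in the instance (e.g.\ a local supercharge with $O(N)$ terms), and the Jordan-Wigner images must remain polynomially sparse so that the instance of the Witten-index problem is genuinely produced in polynomial time from the \#\P instance. Combining membership with this hardness reduction then yields $\#\P$-completeness, and since $\cN\leq 2$ only requires the single nilpotent $\cQ$ satisfying \eqref{algebra}, the construction stays within the stated supersymmetry class.
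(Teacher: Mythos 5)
You have correctly identified the same vehicle as the paper: the hard-core fermion models of \cite{Fendley:2002sg}, in which $\cP$ and $\cQ$ are efficiently constructed from a graph $G$ and the Witten index equals the Euler characteristic of the independence complex of $G$. However, there is a genuine gap at the crux of your hardness argument. The paper does not reduce from \textsc{\#SAT} or from counting independent sets and then fight the signs; it invokes an existing result \cite{ROUNE2013170} that computing the Euler characteristic of the independence complex is \emph{itself} $\#\P$-hard, so the signed (GapP-style) quantity is already known to be hard and no sign-control is needed. Working blind, you must supply that key lemma yourself, and neither of your proposed routes closes it. Route (i), concentrating homology in a single degree, would give $\cI=\pm$ a single Betti number, which is not the number of independent sets or satisfying assignments you set out to count; moreover, the Euler characteristic is the alternating sum $\sum_{j\geq 1}(-1)^{j-1} i_j$ over independent sets graded by size, and since every graph has size-one independent sets while only complete graphs lack size-two ones, the cancellation you hope to eliminate is unavoidable in any nontrivial instance---controlling it is precisely the research-level content of the cited reference. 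Route (ii) is self-defeating: exact deformations leave the index unchanged by construction (that is exactly the invariance property \eqref{exactdef}), so no such deformation can convert an alternating sum into an unsigned count; whatever pairs up into excited states was never contributing to the index in the first place.

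Your membership argument is also not valid as stated. Membership in $\#\P$ must hold for \emph{every} instance $(\cP,\cQ)$ of the problem, so you cannot ``design the supercharge'' so that the negative contributions vanish; the instance is given, not chosen. Your observation that $\cI=n^{B}_{E=0}-n^{F}_{E=0}$ naturally sits in $\mathrm{GapP}$ rather than $\#\P$ is correct and flags the real subtlety, but the resolution must be instance-independent (a uniform encoding or shift, which the paper obtains by adapting the arguments of \cite{ROUNE2013170}), not a restriction to favorable instances. In short: your reduction architecture matches the paper's, but the two load-bearing steps---hardness of the Euler characteristic of independence complexes, and a uniform placement of a signed quantity in $\#\P$---are left open, and the specific repairs you sketch do not work.
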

(See the Supplemental Material for a proof.) This implies, in particular, that finding ground states of supersymmetric systems is intractable.

Having discussed a consequence of the theory of computation in supersymmetric systems we now discuss some consequences of supersymmetry  in the theory of quantum computation. \\

\paragraph{Supersymmetry in qubit space.} The Hilbert space of $N$ qubits comes with a natural $\Bbb Z_{2}$-grading, given by strings with an even and odd number of 1's. The basic observation we make here is that these can be consistently identified with ``bosonic'' and ``fermionic'' subspaces, respectively, with $+1$ and $-1$ parity under $(-1)^{F}$ with $F$ the Hamming weight.  In general, we consider a Hilbert space $\cH\subseteq(\Bbb C^{2})^{\otimes N}$ and define $\cN=2$ supersymmetry in the space of qubits as  a nilpotent map $\cQ$ acting on $\cH$ and sending states with even parity into odd parity and vice versa.  

One way to construct  a  map $\cQ$ is to recall that the Hilbert space of $N$ qubits is isomorphic to the Hilbert space of $N$ (spinless) fermions, which is exploited in the fermionic model of quantum computation \cite{2002AnPhy.298..210B,Ortiz:2000gc}. In this model one considers $N$ vertices of a graph $G$, each of which can be occupied by 0 or 1 spinless fermions. A fermion at vertex $i$ is created by an operator $a_{i}^{\dagger}$ and annihilated by $a_{i}$, satisfying
\begin{equation}
\{a_i,a_j^\dagger\}= \delta_{ij}\,,\qquad i,j\in\{1,\ldots,N\}\,,
\end{equation}
and all other anticommutators vanishing. The $2^{N}$-dimensional Fock space is constructed by acting  with creation operators on the vacuum state with no fermions, $\ket{0\cdots0}_{f}$,  and is in one-to-one correspondence with the space of $N$ qubits:
\begin{equation}
\ket{n_{1}\cdots n_{N}}_{q}\leftrightarrow (a_{1}^{\dagger})^{n_{1}}\cdots (a_{N}^{\dagger})^{n_{N}}\ket{0\cdots0}_{f}\,.
\end{equation}
Operators in qubit space are obtained from operators in Fock space via a spin-$\tfrac12$ Jordan-Wigner transformation,
\begin{equation}\label{asigma}
a_i \rightarrow K_{i}\,  \sigma_+^i \,, \quad a_i^\dagger  \rightarrow \sigma_-^i K_{i}^{\dagger}  \,,
\end{equation}
where $\sigma_{\pm}=\tfrac12(\sigma_{x}\pm i \sigma_{y})$ and the $K_{i}$ are non-local operators, which depend on the graph. In the case of a 1d graph $K_{i}=\prod_{j=1}^{i-1}(-1)^{n_{i}}$ (see, e.g., \cite{Ortiz:2000gc} and references therein for generalizations). The first example of  a system of spinless fermions on a graph with $\cN=2$ supersymmetry was constructed by Nicolai \cite{Nicolai:1976xp}, in which the supercharges are cubic functions of the creation and annihilation operators. As a generalization, we consider the  ansatz,
\begin{equation}\label{ansatzQ}
\cQ =  \sum_{i} a_{i}^{\dagger} \, B_{i}(a,a^{\dagger})\,, \quad \cQ^{\dagger} =  \sum_{i}  B^{\dagger }_{i}(a,a^{\dagger})\, a_{i}\,,
\end{equation}
where the $B_{i}$ are a set of bosonic operators built out of the creation/annihilation operators.  Nilpotency  requires
\begin{equation}\label{nilpQ}
\cQ^{2}= \sum_{i,j} a_{i}^{\dagger} B_{i}(a,a^{\dagger})\, a_{j}^{\dagger} B_{j}(a,a^{\dagger})=0\,.
\end{equation}
Different solutions to this constraint amount to different realizations of supersymmetry on a system of $N$ fermions. There are various interesting explicit models including the fermion hard-core model of \cite{Fendley:2002sg}, relevant to the study of the clique or independence complex, and  the supersymmetric SYK model \cite{Fu:2016vas}, relevant to holography.  (See the Supplemental Material for more details and another method for constructing and classifying explicit supercharge representations.) For now, we keep the discussion general and do not specify the choice of supercharge or $\cH$.  Now, applying  \eqref{asigma}   we have 
\begin{equation}\label{QQbits}
\cQ =\sum_i  \sigma_-^i K_{i}^{\dagger}\,B_{i}\,, \quad \cQ^\dagger =\sum_i  B_i^\dagger\, K_{i}\sigma_+^i \,,
\end{equation}
where $B_{i}=B_{i}(K\sigma_{+},\sigma_{-}K)$. This defines the action of $\cN=2$ supersymmetry on the space of qubits.

Note that the supersymmetry operators \eqref{QQbits} could have been defined directly in the space of qubits, with no reference to the fermionic model. Thus, although the fermionic system does not play a fundamental role, it  provides a  setting where  supersymmetry is naturally defined. \\

\paragraph{Supersymmetric circuits.}

 Let us call a circuit $U$ bosonic if it preserves the parity of the state it acts on,  $[(-1)^{F},U]=0$, and  fermionic if it flips it, $\{(-1)^{F},U\}=0$. The notions of closedness and exactness in  Fock space translate directly into the corresponding notions in qubit space.  We  define a bosonic \emph{supersymmetric quantum circuit} as a bosonic quantum circuit $U_{S}$, which is closed with respect to the  supercharge $\cQ$, i.e.,  
\begin{equation}\label{QU}
[\cQ,U_{S}]=0\,.
\end{equation}
A fermionic supersymmetric circuit is similarly a fermionic circuit satisfying $\{\cQ,U_{S}\}=0$. An obvious example of a bosonic supersymmetric circuit  is time evolution by the supersymmetric Hamiltonian, as $U_{S}=e^{- i t \{\cQ,\cQ^{\dagger}\}}$ commutes with  $\cQ$ (and also with $\cQ^{\dagger}$ in this special case) and with $(-1)^{F}$.  Note the composition of  supersymmetric circuits by matrix multiplication is supersymmetric. In fact, it is straightforward to see that, for a given $\cQ$, the collection of supersymmetric circuits form a group.  Similarly, we call a qubit state $\cQ$-closed if $\cQ\ket{s}_{q}=0$ and a supersymmetric ground state if it is  closed with respect to both supercharges, $\cQ\ket{\Omega}_{q}=\cQ^{\dagger}\ket{\Omega}_{q}=0$. 
From now on we drop the subscript $q$, with the understanding that all states refer to qubit states. 

Thus, for a given supercharge $\cQ$, Eq.~\eqref{QU} imposes a constraint on the class of quantum circuits we consider. The next question is which states we allow as inputs into supersymmetric circuits. Here we let the supersymmetric observables reviewed above guide us, which suggest two natural ``modes'' of computation. \\

\paragraph{The supersymmetric Hadamard test.}
\begin{figure}
\centering
\includegraphics{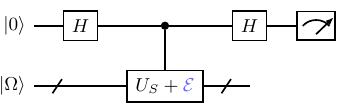}
\caption{The Hadamard test for a supersymmetric circuit. If the input states are supersymmetric ground states $\ket{\Omega}$, the deformation $\cE$ does not affect the outcomes of the ancilla qubit provided $\cE= \{\cQ,\Psi\}$.  }
\label{fig:Had_Circuit}
\end{figure}
The first mode of computation, suggested by the  correlation function \eqref{corr}, is to take as input supersymmetric ground states and calculate matrix elements of  $U_{S}$. The standard technique for estimating  matrix elements is the Hadamard test \cite{aharonov2005polynomial} which, given a state $\ket{s}$ and a unitary matrix $U$, estimates $\bra{s}U\ket{s}$. The procedure consists of adding  an ancilla qubit  initialized to $\ket{0}$, then Hadamarded to $\ket{+}$, and  used as a control qubit for unitary evolution of the state $\ket{s}$ by  the circuit $U$. Finally, the ancilla qubit is Hadamarded one more time and measured in the computational basis. Applying this to  a supersymmetric circuit  in a supersymmetric ground state, the probability of measuring $\ket{0}$ in the ancilla qubit is
\begin{equation}
\label{p0Had}
p(0) = \frac{1}{2}\(1+\Re \bra{\Omega}U_{S}\ket{\Omega}\)\,.
\end{equation}
Repeated measurements of the ancilla qubit then leads to an estimate of $\Re \bra{\Omega}U_{S}\ket{\Omega}$. (The imaginary part is  obtained by changing the initial state of the ancilla.) What is particular about the supersymmetric setting is that the robustness property of supersymmetric systems translates directly into a corresponding property of the quantum circuit. Namely, the  outcome probability \eqref{p0Had}  is unaffected by a deformation of the circuit of the form
 \begin{equation}\label{exactdefqubit}
 U_{S}\to U_{S}+\cE\,,
\end{equation}
with $\cE=\{\cQ,\Psi\}$ for some $\Psi$ (see Fig.~\ref{fig:Had_Circuit}). More generally, for $U_{S}=U_{S}^{(1)}\cdots U_{S}^{(n)}$ a product of supersymmetric circuits the outcome probability is unaffected by an exact deformation of any of the circuits, $U_{S}^{(k)}\to U_{S}^{(k)}+\cE^{(k)}$.  This is the first example of ``robustness'' of supersymmetric quantum circuits. If one is interested in computing correlation functions of operators $\cO$, not necessarily unitary as in \eqref{corr}, one can expand each operator into a linear combination of unitaries,  as in \cite{Ortiz:2000gc},  and then apply the algorithm above to each term.

To run this algorithm the supersymmetric ground states  $\ket{\Omega}$ must be either known in advance or  be prepared. However, since the question of whether  a system has supersymmetric ground states to begin with is hard, we would like to design an algorithm that  gives us  information on possible supersymmetric ground states.  \\

\paragraph{The generalized Witten index algorithm.} The second mode of computation, suggested by the refined Witten index  \eqref{indexdef}, is to allow for mixed states and take this to be the maximally mixed state, $\rho_{\cH}=\mathbb{I_{\cH}}/\dim \cH$. This  has the advantage that there is no need to prepare supersymmetric ground states.
\begin{figure}
\centering
\includegraphics{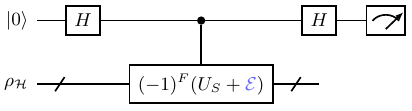}
\caption{The trace estimation algorithm for the generalized Witten index $\text{Tr}\left[(-1)^{F}U_{S}\right]$. The average outcome of the ancilla qubit is  insensitive to exact deformations $\cE$.    }
\label{fig:WittenAlgoDef}
\end{figure}

The standard method for calculating the trace of a unitary matrix acting on $M$ qubits is trace estimation \cite{PhysRevLett.81.5672}. This is basically the same as the Hadamard test above except that, rather than inputing a pure state $\ket{s}$, the input is $\rho_{\cH}$. This is known as the one-clean-qubit model of quantum computation \cite{PhysRevLett.81.5672}. One then applies a controlled unitary evolution of the density matrix, $\rho_{\cH}\to U \rho_{\cH}U^{\dagger}$, and measures the ``clean'' ancilla qubit.  Applying this to a circuit of the form $U=(-1)^{F}U_{S}$, the probability of measuring $\ket{0}$ in the ancilla qubit is given by
\begin{equation}\label{p0}
p(0)= \frac12\(1+\frac{1}{2^{M}} \Re \text{Tr}_{\cH} \left[ (-1)^{F} U_{S}\right]\)\,,
\end{equation}
and repeated measurement of the ancilla qubit approximates the generalized Witten index. We refer to this as the {\it generalized quantum Witten machine}. Once again, we note that the outcome probability  \eqref{p0} is unaffected by the exact deformation \eqref{exactdefqubit} (see Fig.~\ref{fig:WittenAlgoDef}).  More generally, for $U_{S}$ a product of supersymmetric circuits the outcomes are unaffected by an exact deformation of each circuit. If one is interested in $\text{Tr}_{\cH}\,[(-1)^{F}\cO]$ with $\cO$ not necessarily unitary one may expand $\cO$ as sum of unitary supersymmetric matrices. \\

Another interesting property arises   when the circuit is closed with respect to both supercharges. Consider the supersymmetric circuit $U_{S}=\hat U_{S} U_{\cE}$, with $[\cQ,\hat U_{S}]=[\cQ^{\dagger},\hat U_{S}]=0$ and $U_{\cE}\equiv e^{i \text{Re}\, \cE}$. Then, one can check that 
 \begin{equation}
\text{Tr}_{\cH} \left[(-1)^{F} \hat U_{S}\, U_{\cE}\right] = \text{Tr}_{\cH} \left[ (-1)^{F} \hat U_{S}\right]\,.
\end{equation}
Thus, circuits of the form of $U_{\cE}$ can be completely removed from such quantum algorithms (see  Fig.~\ref{fig:WittenAlgoExactCirc}). 
\begin{figure} 
\centering
\includegraphics{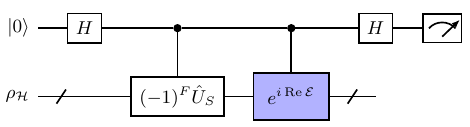}
\caption{ The unitary circuit  $e^{i\Re \cE}$ can be dropped from the algorithm, without affecting the measured outcomes, provided the input state $\rho_{\cH}$ is the maximally mixed state  and $[\cQ,\hat U_{S}]=[\cQ^{\dagger},\hat U_{S}]=0$. }
	\label{fig:WittenAlgoExactCirc}
\end{figure}

It is important to note that the algorithm just described provides only an  approximation to the (generalized) Witten index and \emph{not} an exact result which, as  implied by Theorem~\ref{thmWI}, is intractable even for a quantum computer. Indeed, for the trace estimation algorithm to provide an $\cO(1)$ approximation to the trace, an exponential number of measurements is required. \\

\paragraph{Outlook.}  

An interesting question is whether the local Hamiltonian problem remains $\QMA$-hard for  supersymmetric systems.  A physical intuition on why this  may be the case  is that  supersymmetric models can exhibit frustration \cite{Huijse_2008}. Since supersymmetric ground states are elements of cohomology we expect this to be relevant to the complexity of computing cohomology groups.  Similarly, one may ask if trace estimation remains $\mathsf{DQC1}$-hard \cite{shor2007estimating} for matrices of the form $U=(-1)^{F}U_{S}$, with $U_{S}$ supersymmetric. If so, this would strongly suggest that the generalized Witten machine cannot be simulated classically and thus gives a quantum advantage. We hope to return to these questions in the future.

We note that the robustness properties discussed above suggest the possibility that these may be exploited to design algorithms that are intrinsically robust under certain ``errors.''  The crucial point here is that the supercharge $\cQ$ is not dictated by the environment but rather one is free to choose it. Then, given an error $\cE$ one wishes to protect against, the question is  whether there is a choice of supercharge $\cQ$ and an operator $\Psi$ such that $\cE=\{\cQ,\Psi\}$. If so, and the $\cQ$-cohomology of unitary operators is nontrivial, this would lead to a family of quantum algorithms which are robust under these errors.   

Interesting computational problems also arise in continuum supersymmetric quantum mechanical systems, including the calculation of the Morse index of a function  and the Euler characteristic of a manifold \cite{Witten:1982im,Witten:1982df} (see also \cite{Hori:2003ic} for  various applications of supersymmetric quantum mechanics). It would be interesting if some of the ideas presented here could be applied to a discretized version of these systems. \\

Finally, we hope that our discussion brings the attention to the  role of supersymmetry in quantum computation and quantum information more broadly, a subject much underexplored. 

\

\begin{acknowledgments} The author thanks Jan de Boer, Chris Cade, Irina Kostitsyna,  Kareljan Schoutens, and Ronald de Wolf for discussions. This work was  supported by Nederlandse Organisatie voor Wetenschappelijk Onderzoek (NWO) via a Vidi grant and is also part of the Delta ITP consortium, a program of the NWO that is funded by the Dutch Ministry of Education, Culture and Science (OCW), and  by the EU's Horizon 2020 Research Council grant 724659 Massive-Cosmo ERC-2016-COG and the STFC grant ST/T000791/1.
\end{acknowledgments} 



\begin{thebibliography}{36}%
\makeatletter
\providecommand \@ifxundefined [1]{%
 \@ifx{#1\undefined}
}%
\providecommand \@ifnum [1]{%
 \ifnum #1\expandafter \@firstoftwo
 \else \expandafter \@secondoftwo
 \fi
}%
\providecommand \@ifx [1]{%
 \ifx #1\expandafter \@firstoftwo
 \else \expandafter \@secondoftwo
 \fi
}%
\providecommand \natexlab [1]{#1}%
\providecommand \enquote  [1]{``#1''}%
\providecommand \bibnamefont  [1]{#1}%
\providecommand \bibfnamefont [1]{#1}%
\providecommand \citenamefont [1]{#1}%
\providecommand \href@noop [0]{\@secondoftwo}%
\providecommand \href [0]{\begingroup \@sanitize@url \@href}%
\providecommand \@href[1]{\@@startlink{#1}\@@href}%
\providecommand \@@href[1]{\endgroup#1\@@endlink}%
\providecommand \@sanitize@url [0]{\catcode `\\12\catcode `\$12\catcode
  `\&12\catcode `\#12\catcode `\^12\catcode `\_12\catcode `\%12\relax}%
\providecommand \@@startlink[1]{}%
\providecommand \@@endlink[0]{}%
\providecommand \url  [0]{\begingroup\@sanitize@url \@url }%
\providecommand \@url [1]{\endgroup\@href {#1}{\urlprefix }}%
\providecommand \urlprefix  [0]{URL }%
\providecommand \Eprint [0]{\href }%
\providecommand \doibase [0]{http://dx.doi.org/}%
\providecommand \selectlanguage [0]{\@gobble}%
\providecommand \bibinfo  [0]{\@secondoftwo}%
\providecommand \bibfield  [0]{\@secondoftwo}%
\providecommand \translation [1]{[#1]}%
\providecommand \BibitemOpen [0]{}%
\providecommand \bibitemStop [0]{}%
\providecommand \bibitemNoStop [0]{.\EOS\space}%
\providecommand \EOS [0]{\spacefactor3000\relax}%
\providecommand \BibitemShut  [1]{\csname bibitem#1\endcsname}%
\let\auto@bib@innerbib\@empty
\bibitem [{\citenamefont {Barahona}(1982)}]{Barahona_1982}%
  \BibitemOpen
  \bibfield  {author} {\bibinfo {author} {\bibfnamefont {F.}~\bibnamefont
  {Barahona}},\ }\href {\doibase 10.1088/0305-4470/15/10/028} {\bibfield
  {journal} {\bibinfo  {journal} {Journal of Physics A: Mathematical and
  General}\ }\textbf {\bibinfo {volume} {15}},\ \bibinfo {pages} {3241}
  (\bibinfo {year} {1982})}\BibitemShut {NoStop}%
\bibitem [{\citenamefont {Kitaev}\ \emph {et~al.}()\citenamefont {Kitaev},
  \citenamefont {Shen},\ and\ \citenamefont {Vyalyi}}]{KitaevBook}%
  \BibitemOpen
  \bibfield  {author} {\bibinfo {author} {\bibfnamefont {A.~Y.}\ \bibnamefont
  {Kitaev}}, \bibinfo {author} {\bibfnamefont {A.~H.}\ \bibnamefont {Shen}}, \
  and\ \bibinfo {author} {\bibfnamefont {M.~N.}\ \bibnamefont {Vyalyi}},\
  }\href@noop {} {\emph {\bibinfo {title} {{Classical and Quantum
  Computation}}}}\ (\bibinfo  {publisher} {Graduate Studies in
  Mathematics})\BibitemShut {NoStop}%
\bibitem [{\citenamefont {{Kempe}}\ \emph {et~al.}(2004)\citenamefont
  {{Kempe}}, \citenamefont {{Kitaev}},\ and\ \citenamefont
  {{Regev}}}]{2004quant.ph..6180K}%
  \BibitemOpen
  \bibfield  {author} {\bibinfo {author} {\bibfnamefont {J.}~\bibnamefont
  {{Kempe}}}, \bibinfo {author} {\bibfnamefont {A.}~\bibnamefont {{Kitaev}}}, \
  and\ \bibinfo {author} {\bibfnamefont {O.}~\bibnamefont {{Regev}}},\
  }\href@noop {} {\  (\bibinfo {year} {2004})},\ \Eprint
  {http://arxiv.org/abs/quant-ph/0406180} {arXiv:quant-ph/0406180 [quant-ph]}
  \BibitemShut {NoStop}%
\bibitem [{\citenamefont {Kirkpatrick}\ \emph {et~al.}(1983)\citenamefont
  {Kirkpatrick}, \citenamefont {Gelatt},\ and\ \citenamefont
  {Vecchi}}]{Kirkpatrick671}%
  \BibitemOpen
  \bibfield  {author} {\bibinfo {author} {\bibfnamefont {S.}~\bibnamefont
  {Kirkpatrick}}, \bibinfo {author} {\bibfnamefont {C.~D.}\ \bibnamefont
  {Gelatt}}, \ and\ \bibinfo {author} {\bibfnamefont {M.~P.}\ \bibnamefont
  {Vecchi}},\ }\href {\doibase 10.1126/science.220.4598.671} {\ \textbf
  {\bibinfo {volume} {220}},\ \bibinfo {pages} {671} (\bibinfo {year}
  {1983})}\BibitemShut {NoStop}%
\bibitem [{\citenamefont {Fu}\ and\ \citenamefont {Anderson}(1986)}]{Fu_1986}%
  \BibitemOpen
  \bibfield  {author} {\bibinfo {author} {\bibfnamefont {Y.}~\bibnamefont
  {Fu}}\ and\ \bibinfo {author} {\bibfnamefont {P.~W.}\ \bibnamefont
  {Anderson}},\ }\href {\doibase 10.1088/0305-4470/19/9/033} {\bibfield
  {journal} {\bibinfo  {journal} {Journal of Physics A: Mathematical and
  General}\ }\textbf {\bibinfo {volume} {19}},\ \bibinfo {pages} {1605}
  (\bibinfo {year} {1986})}\BibitemShut {NoStop}%
\bibitem [{\citenamefont {Monasson}\ \emph {et~al.}(1999)\citenamefont
  {Monasson}, \citenamefont {Zecchina}, \citenamefont {Kirkpatrick},
  \citenamefont {Selman},\ and\ \citenamefont
  {Troyansky}}]{NP_PhaseTransitions}%
  \BibitemOpen
  \bibfield  {author} {\bibinfo {author} {\bibfnamefont {R.}~\bibnamefont
  {Monasson}}, \bibinfo {author} {\bibfnamefont {R.}~\bibnamefont {Zecchina}},
  \bibinfo {author} {\bibfnamefont {S.}~\bibnamefont {Kirkpatrick}}, \bibinfo
  {author} {\bibfnamefont {B.}~\bibnamefont {Selman}}, \ and\ \bibinfo {author}
  {\bibfnamefont {L.}~\bibnamefont {Troyansky}},\ }\href {\doibase
  10.1038/22055} {\bibfield  {journal} {\bibinfo  {journal} {Nature}\ }\textbf
  {\bibinfo {volume} {400}},\ \bibinfo {pages} {133} (\bibinfo {year}
  {1999})}\BibitemShut {NoStop}%
\bibitem [{\citenamefont {Aaronson}()}]{aaronson2005npcomplete}%
  \BibitemOpen
  \bibfield  {author} {\bibinfo {author} {\bibfnamefont {S.}~\bibnamefont
  {Aaronson}},\ }\href@noop {} {\ }\Eprint
  {http://arxiv.org/abs/quant-ph/0502072} {arXiv:quant-ph/0502072 [quant-ph]}
  \BibitemShut {NoStop}%
\bibitem [{\citenamefont {Cubitt}\ \emph {et~al.}(2015)\citenamefont {Cubitt},
  \citenamefont {Perez-Garcia},\ and\ \citenamefont {Wolf}}]{CubittUndec}%
  \BibitemOpen
  \bibfield  {author} {\bibinfo {author} {\bibfnamefont {T.~S.}\ \bibnamefont
  {Cubitt}}, \bibinfo {author} {\bibfnamefont {D.}~\bibnamefont
  {Perez-Garcia}}, \ and\ \bibinfo {author} {\bibfnamefont {M.~M.}\
  \bibnamefont {Wolf}},\ }\href {\doibase 10.1038/nature16059} {\bibfield
  {journal} {\bibinfo  {journal} {Nature}\ }\textbf {\bibinfo {volume} {528}},\
  \bibinfo {pages} {207} (\bibinfo {year} {2015})}\BibitemShut {NoStop}%
\bibitem [{\citenamefont {Nicolai}(1976)}]{Nicolai:1976xp}%
  \BibitemOpen
  \bibfield  {author} {\bibinfo {author} {\bibfnamefont {H.}~\bibnamefont
  {Nicolai}},\ }\href {\doibase 10.1088/0305-4470/9/9/010} {\bibfield
  {journal} {\bibinfo  {journal} {J. Phys. A}\ }\textbf {\bibinfo {volume}
  {9}},\ \bibinfo {pages} {1497} (\bibinfo {year} {1976})}\BibitemShut
  {NoStop}%
\bibitem [{\citenamefont {Witten}(1981)}]{Witten:1981nf}%
  \BibitemOpen
  \bibfield  {author} {\bibinfo {author} {\bibfnamefont {E.}~\bibnamefont
  {Witten}},\ }\href {\doibase 10.1016/0550-3213(81)90006-7} {\bibfield
  {journal} {\bibinfo  {journal} {Nucl. Phys. B}\ }\textbf {\bibinfo {volume}
  {188}},\ \bibinfo {pages} {513} (\bibinfo {year} {1981})}\BibitemShut
  {NoStop}%
\bibitem [{\citenamefont {Witten}(1982{\natexlab{a}})}]{Witten:1982im}%
  \BibitemOpen
  \bibfield  {author} {\bibinfo {author} {\bibfnamefont {E.}~\bibnamefont
  {Witten}},\ }\href@noop {} {\bibfield  {journal} {\bibinfo  {journal} {J.
  Diff. Geom.}\ }\textbf {\bibinfo {volume} {17}},\ \bibinfo {pages} {661}
  (\bibinfo {year} {1982}{\natexlab{a}})}\BibitemShut {NoStop}%
\bibitem [{\citenamefont {Witten}(1982{\natexlab{b}})}]{Witten:1982df}%
  \BibitemOpen
  \bibfield  {author} {\bibinfo {author} {\bibfnamefont {E.}~\bibnamefont
  {Witten}},\ }\href {\doibase 10.1016/0550-3213(82)90071-2} {\bibfield
  {journal} {\bibinfo  {journal} {Nucl. Phys. B}\ }\textbf {\bibinfo {volume}
  {202}},\ \bibinfo {pages} {253} (\bibinfo {year}
  {1982}{\natexlab{b}})}\BibitemShut {NoStop}%
\bibitem [{\citenamefont {{Bravyi}}\ and\ \citenamefont
  {{Kitaev}}(2002)}]{2002AnPhy.298..210B}%
  \BibitemOpen
  \bibfield  {author} {\bibinfo {author} {\bibfnamefont {S.~B.}\ \bibnamefont
  {{Bravyi}}}\ and\ \bibinfo {author} {\bibfnamefont {A.~Y.}\ \bibnamefont
  {{Kitaev}}},\ }\href {\doibase 10.1006/aphy.2002.6254} {\bibfield  {journal}
  {\bibinfo  {journal} {Annals of Physics}\ }\textbf {\bibinfo {volume}
  {298}},\ \bibinfo {pages} {210} (\bibinfo {year} {2002})},\ \Eprint
  {http://arxiv.org/abs/quant-ph/0003137} {arXiv:quant-ph/0003137 [quant-ph]}
  \BibitemShut {NoStop}%
\bibitem [{\citenamefont {Ortiz}\ \emph {et~al.}(2001)\citenamefont {Ortiz},
  \citenamefont {Gubernatis}, \citenamefont {Knill},\ and\ \citenamefont
  {Laflamme}}]{Ortiz:2000gc}%
  \BibitemOpen
  \bibfield  {author} {\bibinfo {author} {\bibfnamefont {G.}~\bibnamefont
  {Ortiz}}, \bibinfo {author} {\bibfnamefont {J.}~\bibnamefont {Gubernatis}},
  \bibinfo {author} {\bibfnamefont {E.}~\bibnamefont {Knill}}, \ and\ \bibinfo
  {author} {\bibfnamefont {R.}~\bibnamefont {Laflamme}},\ }\href {\doibase
  10.1103/PhysRevA.64.022319} {\bibfield  {journal} {\bibinfo  {journal} {Phys.
  Rev. A}\ }\textbf {\bibinfo {volume} {64}},\ \bibinfo {pages} {022319}
  (\bibinfo {year} {2001})},\ \bibinfo {note} {[Erratum: Phys.Rev.A 65, 029902
  (2002)]},\ \Eprint {http://arxiv.org/abs/cond-mat/0012334}
  {arXiv:cond-mat/0012334} \BibitemShut {NoStop}%
\bibitem [{\citenamefont {Sipser}(2013)}]{sipser13}%
  \BibitemOpen
  \bibfield  {author} {\bibinfo {author} {\bibfnamefont {M.}~\bibnamefont
  {Sipser}},\ }\href@noop {} {\emph {\bibinfo {title} {Introduction to the
  Theory of Computation}}},\ \bibinfo {edition} {3rd}\ ed.\ (\bibinfo
  {publisher} {Course Technology},\ \bibinfo {address} {Boston, MA},\ \bibinfo
  {year} {2013})\BibitemShut {NoStop}%
\bibitem [{\citenamefont {Papadimitriou}()}]{PapdimitriouBook}%
  \BibitemOpen
  \bibfield  {author} {\bibinfo {author} {\bibfnamefont {C.~H.}\ \bibnamefont
  {Papadimitriou}},\ }\href@noop {} {\emph {\bibinfo {title} {{Computational
  Complexity}}}}\ (\bibinfo  {publisher} {Pearson Education (US)})\BibitemShut
  {NoStop}%
\bibitem [{\citenamefont {Fendley}\ \emph
  {et~al.}(2003{\natexlab{a}})\citenamefont {Fendley}, \citenamefont
  {Schoutens},\ and\ \citenamefont {de~Boer}}]{Fendley:2002sg}%
  \BibitemOpen
  \bibfield  {author} {\bibinfo {author} {\bibfnamefont {P.}~\bibnamefont
  {Fendley}}, \bibinfo {author} {\bibfnamefont {K.}~\bibnamefont {Schoutens}},
  \ and\ \bibinfo {author} {\bibfnamefont {J.}~\bibnamefont {de~Boer}},\ }\href
  {\doibase 10.1103/PhysRevLett.90.120402} {\bibfield  {journal} {\bibinfo
  {journal} {Phys. Rev. Lett.}\ }\textbf {\bibinfo {volume} {90}},\ \bibinfo
  {pages} {120402} (\bibinfo {year} {2003}{\natexlab{a}})},\ \Eprint
  {http://arxiv.org/abs/hep-th/0210161} {arXiv:hep-th/0210161} \BibitemShut
  {NoStop}%
\bibitem [{\citenamefont {Fu}\ \emph {et~al.}(2017)\citenamefont {Fu},
  \citenamefont {Gaiotto}, \citenamefont {Maldacena},\ and\ \citenamefont
  {Sachdev}}]{Fu:2016vas}%
  \BibitemOpen
  \bibfield  {author} {\bibinfo {author} {\bibfnamefont {W.}~\bibnamefont
  {Fu}}, \bibinfo {author} {\bibfnamefont {D.}~\bibnamefont {Gaiotto}},
  \bibinfo {author} {\bibfnamefont {J.}~\bibnamefont {Maldacena}}, \ and\
  \bibinfo {author} {\bibfnamefont {S.}~\bibnamefont {Sachdev}},\ }\href
  {\doibase 10.1103/PhysRevD.95.026009} {\bibfield  {journal} {\bibinfo
  {journal} {Phys. Rev. D}\ }\textbf {\bibinfo {volume} {95}},\ \bibinfo
  {pages} {026009} (\bibinfo {year} {2017})},\ \bibinfo {note} {[Addendum:
  Phys.Rev.D 95, 069904 (2017)]},\ \Eprint {http://arxiv.org/abs/1610.08917}
  {arXiv:1610.08917 [hep-th]} \BibitemShut {NoStop}%
\bibitem [{\citenamefont {Aharonov}\ \emph {et~al.}(2005)\citenamefont
  {Aharonov}, \citenamefont {Jones},\ and\ \citenamefont
  {Landau}}]{aharonov2005polynomial}%
  \BibitemOpen
  \bibfield  {author} {\bibinfo {author} {\bibfnamefont {D.}~\bibnamefont
  {Aharonov}}, \bibinfo {author} {\bibfnamefont {V.}~\bibnamefont {Jones}}, \
  and\ \bibinfo {author} {\bibfnamefont {Z.}~\bibnamefont {Landau}},\
  }\href@noop {} {\  (\bibinfo {year} {2005})},\ \Eprint
  {http://arxiv.org/abs/quant-ph/0511096} {arXiv:quant-ph/0511096 [quant-ph]}
  \BibitemShut {NoStop}%
\bibitem [{\citenamefont {Knill}\ and\ \citenamefont
  {Laflamme}(1998)}]{PhysRevLett.81.5672}%
  \BibitemOpen
  \bibfield  {author} {\bibinfo {author} {\bibfnamefont {E.}~\bibnamefont
  {Knill}}\ and\ \bibinfo {author} {\bibfnamefont {R.}~\bibnamefont
  {Laflamme}},\ }\href {\doibase 10.1103/PhysRevLett.81.5672} {\bibfield
  {journal} {\bibinfo  {journal} {Phys. Rev. Lett.}\ }\textbf {\bibinfo
  {volume} {81}},\ \bibinfo {pages} {5672} (\bibinfo {year}
  {1998})}\BibitemShut {NoStop}%
\bibitem [{\citenamefont {Huijse}\ and\ \citenamefont
  {Schoutens}(2008)}]{Huijse_2008}%
  \BibitemOpen
  \bibfield  {author} {\bibinfo {author} {\bibfnamefont {L.}~\bibnamefont
  {Huijse}}\ and\ \bibinfo {author} {\bibfnamefont {K.}~\bibnamefont
  {Schoutens}},\ }\href {\doibase 10.1140/epjb/e2008-00150-9} {\bibfield
  {journal} {\bibinfo  {journal} {The European Physical Journal B}\ }\textbf
  {\bibinfo {volume} {64}},\ \bibinfo {pages} {543–550} (\bibinfo {year}
  {2008})}\BibitemShut {NoStop}%
\bibitem [{\citenamefont {Shor}\ and\ \citenamefont
  {Jordan}(2007)}]{shor2007estimating}%
  \BibitemOpen
  \bibfield  {author} {\bibinfo {author} {\bibfnamefont {P.~W.}\ \bibnamefont
  {Shor}}\ and\ \bibinfo {author} {\bibfnamefont {S.~P.}\ \bibnamefont
  {Jordan}},\ }\href@noop {} {\  (\bibinfo {year} {2007})},\ \Eprint
  {http://arxiv.org/abs/0707.2831} {arXiv:0707.2831 [quant-ph]} \BibitemShut
  {NoStop}%
\bibitem [{\citenamefont {Hori}\ \emph {et~al.}(2003)\citenamefont {Hori},
  \citenamefont {Katz}, \citenamefont {Klemm}, \citenamefont {Pandharipande},
  \citenamefont {Thomas}, \citenamefont {Vafa}, \citenamefont {Vakil},\ and\
  \citenamefont {Zaslow}}]{Hori:2003ic}%
  \BibitemOpen
  \bibfield  {author} {\bibinfo {author} {\bibfnamefont {K.}~\bibnamefont
  {Hori}}, \bibinfo {author} {\bibfnamefont {S.}~\bibnamefont {Katz}}, \bibinfo
  {author} {\bibfnamefont {A.}~\bibnamefont {Klemm}}, \bibinfo {author}
  {\bibfnamefont {R.}~\bibnamefont {Pandharipande}}, \bibinfo {author}
  {\bibfnamefont {R.}~\bibnamefont {Thomas}}, \bibinfo {author} {\bibfnamefont
  {C.}~\bibnamefont {Vafa}}, \bibinfo {author} {\bibfnamefont {R.}~\bibnamefont
  {Vakil}}, \ and\ \bibinfo {author} {\bibfnamefont {E.}~\bibnamefont
  {Zaslow}},\ }\href@noop {} {\emph {\bibinfo {title} {{Mirror symmetry}}}},\
  \bibinfo {series} {Clay mathematics monographs}, Vol.~\bibinfo {volume} {1}\
  (\bibinfo  {publisher} {AMS},\ \bibinfo {address} {Providence, USA},\
  \bibinfo {year} {2003})\BibitemShut {NoStop}%
\bibitem [{\citenamefont {Roune}\ and\ \citenamefont
  {de~Cabez\'{o}n}(2013)}]{ROUNE2013170}%
  \BibitemOpen
  \bibfield  {author} {\bibinfo {author} {\bibfnamefont {B.~H.}\ \bibnamefont
  {Roune}}\ and\ \bibinfo {author} {\bibfnamefont {E.~S.}\ \bibnamefont
  {de~Cabez\'{o}n}},\ }\href {\doibase
  https://doi.org/10.1016/j.jsc.2012.07.003} {\bibfield  {journal} {\bibinfo
  {journal} {Journal of Symbolic Computation}\ }\textbf {\bibinfo {volume}
  {50}},\ \bibinfo {pages} {170 } (\bibinfo {year} {2013})}\BibitemShut
  {NoStop}%
\bibitem [{\citenamefont {Huijse}\ and\ \citenamefont
  {Schoutens}(2009)}]{huijse2009supersymmetry}%
  \BibitemOpen
  \bibfield  {author} {\bibinfo {author} {\bibfnamefont {L.}~\bibnamefont
  {Huijse}}\ and\ \bibinfo {author} {\bibfnamefont {K.}~\bibnamefont
  {Schoutens}},\ }\href@noop {} {\  (\bibinfo {year} {2009})},\ \Eprint
  {http://arxiv.org/abs/0903.0784} {arXiv:0903.0784 [cond-mat.str-el]}
  \BibitemShut {NoStop}%
\bibitem [{\citenamefont {Huijse}\ \emph {et~al.}(2012)\citenamefont {Huijse},
  \citenamefont {Mehta}, \citenamefont {Moran}, \citenamefont {Schoutens},\
  and\ \citenamefont {Vala}}]{Huijse:2011aa}%
  \BibitemOpen
  \bibfield  {author} {\bibinfo {author} {\bibfnamefont {L.}~\bibnamefont
  {Huijse}}, \bibinfo {author} {\bibfnamefont {D.~B.}\ \bibnamefont {Mehta}},
  \bibinfo {author} {\bibfnamefont {N.}~\bibnamefont {Moran}}, \bibinfo
  {author} {\bibfnamefont {K.}~\bibnamefont {Schoutens}}, \ and\ \bibinfo
  {author} {\bibfnamefont {J.}~\bibnamefont {Vala}},\ }\href {\doibase
  10.1088/1367-2630/14/7/073002} {\bibfield  {journal} {\bibinfo  {journal}
  {New J. Phys.}\ }\textbf {\bibinfo {volume} {14}},\ \bibinfo {pages} {073002}
  (\bibinfo {year} {2012})},\ \Eprint {http://arxiv.org/abs/1112.3314}
  {arXiv:1112.3314 [cond-mat.str-el]} \BibitemShut {NoStop}%
\bibitem [{\citenamefont {Tovey}(1984)}]{TOVEY198485}%
  \BibitemOpen
  \bibfield  {author} {\bibinfo {author} {\bibfnamefont {C.~A.}\ \bibnamefont
  {Tovey}},\ }\href {\doibase https://doi.org/10.1016/0166-218X(84)90081-7}
  {\bibfield  {journal} {\bibinfo  {journal} {Discrete Applied Mathematics}\
  }\textbf {\bibinfo {volume} {8}},\ \bibinfo {pages} {85} (\bibinfo {year}
  {1984})}\BibitemShut {NoStop}%
\bibitem [{\citenamefont {Brandao}(2008)}]{fern2008entanglement}%
  \BibitemOpen
  \bibfield  {author} {\bibinfo {author} {\bibfnamefont {F.~G. S.~L.}\
  \bibnamefont {Brandao}},\ }\href@noop {} {\  (\bibinfo {year} {2008})},\
  \Eprint {http://arxiv.org/abs/0810.0026} {arXiv:0810.0026 [quant-ph]}
  \BibitemShut {NoStop}%
\bibitem [{\citenamefont {Chowdhury}\ \emph {et~al.}(2019)\citenamefont
  {Chowdhury}, \citenamefont {Somma},\ and\ \citenamefont
  {Subasi}}]{chowdhury2019computing}%
  \BibitemOpen
  \bibfield  {author} {\bibinfo {author} {\bibfnamefont {A.~N.}\ \bibnamefont
  {Chowdhury}}, \bibinfo {author} {\bibfnamefont {R.~D.}\ \bibnamefont
  {Somma}}, \ and\ \bibinfo {author} {\bibfnamefont {Y.}~\bibnamefont
  {Subasi}},\ }\href@noop {} {\  (\bibinfo {year} {2019})},\ \Eprint
  {http://arxiv.org/abs/1910.11842} {arXiv:1910.11842 [quant-ph]} \BibitemShut
  {NoStop}%
\bibitem [{Note1()}]{Note1}%
  \BibitemOpen
  \bibinfo {note} {The author thanks Chris Cade for discussions on
  this.}\BibitemShut {Stop}%
\bibitem [{\citenamefont {Fendley}\ \emph
  {et~al.}(2003{\natexlab{b}})\citenamefont {Fendley}, \citenamefont
  {Nienhuis},\ and\ \citenamefont {Schoutens}}]{Fendley_2003}%
  \BibitemOpen
  \bibfield  {author} {\bibinfo {author} {\bibfnamefont {P.}~\bibnamefont
  {Fendley}}, \bibinfo {author} {\bibfnamefont {B.}~\bibnamefont {Nienhuis}}, \
  and\ \bibinfo {author} {\bibfnamefont {K.}~\bibnamefont {Schoutens}},\ }\href
  {\doibase 10.1088/0305-4470/36/50/004} {\bibfield  {journal} {\bibinfo
  {journal} {Journal of Physics A: Mathematical and General}\ }\textbf
  {\bibinfo {volume} {36}},\ \bibinfo {pages} {12399–12424} (\bibinfo {year}
  {2003}{\natexlab{b}})}\BibitemShut {NoStop}%
\bibitem [{\citenamefont {Fokkema}\ and\ \citenamefont
  {Schoutens}(2017)}]{SciPostPhys.3.1.004}%
  \BibitemOpen
  \bibfield  {author} {\bibinfo {author} {\bibfnamefont {T.}~\bibnamefont
  {Fokkema}}\ and\ \bibinfo {author} {\bibfnamefont {K.}~\bibnamefont
  {Schoutens}},\ }\href {\doibase 10.21468/SciPostPhys.3.1.004} {\bibfield
  {journal} {\bibinfo  {journal} {SciPost Phys.}\ }\textbf {\bibinfo {volume}
  {3}},\ \bibinfo {pages} {004} (\bibinfo {year} {2017})}\BibitemShut {NoStop}%
\bibitem [{\citenamefont {Huijse}(2010)}]{huijse:10}%
  \BibitemOpen
  \bibfield  {author} {\bibinfo {author} {\bibfnamefont {L.}~\bibnamefont
  {Huijse}},\ }\emph {\bibinfo {title} {A supersymmetric model for lattice
  fermions}},\ \href {https://pure.uva.nl/ws/files/810348/77834_thesis.pdf}
  {Ph.D. thesis},\ \bibinfo  {school} {Universiteit van Amsterdam} (\bibinfo
  {year} {2010})\BibitemShut {NoStop}%
\bibitem [{\citenamefont {Sarosi}(2018)}]{Sarosi_2018}%
  \BibitemOpen
  \bibfield  {author} {\bibinfo {author} {\bibfnamefont {G.}~\bibnamefont
  {Sarosi}},\ }\href {\doibase 10.22323/1.323.0001} {\bibfield  {journal}
  {\bibinfo  {journal} {Proceedings of XIII Modave Summer School in
  Mathematical Physics --- PoS(Modave2017)}\ } (\bibinfo {year} {2018}),\
  10.22323/1.323.0001}\BibitemShut {NoStop}%
\bibitem [{\citenamefont {Rosenhaus}(2019)}]{Rosenhaus:2018dtp}%
  \BibitemOpen
  \bibfield  {author} {\bibinfo {author} {\bibfnamefont {V.}~\bibnamefont
  {Rosenhaus}},\ }\href {\doibase 10.1088/1751-8121/ab2ce1} {\bibfield
  {journal} {\bibinfo  {journal} {J. Phys. A}\ }\textbf {\bibinfo {volume}
  {52}},\ \bibinfo {pages} {323001} (\bibinfo {year} {2019})},\ \Eprint
  {http://arxiv.org/abs/1807.03334} {arXiv:1807.03334 [hep-th]} \BibitemShut
  {NoStop}%
\bibitem [{\citenamefont {Göktaş}\ \emph {et~al.}(2019)\citenamefont
  {Göktaş}, \citenamefont {Tham}, \citenamefont {Bonsma-Fisher},\ and\
  \citenamefont {Brodutch}}]{gkta2019benchmarking}%
  \BibitemOpen
  \bibfield  {author} {\bibinfo {author} {\bibfnamefont {O.}~\bibnamefont
  {Göktaş}}, \bibinfo {author} {\bibfnamefont {W.~K.}\ \bibnamefont {Tham}},
  \bibinfo {author} {\bibfnamefont {K.}~\bibnamefont {Bonsma-Fisher}}, \ and\
  \bibinfo {author} {\bibfnamefont {A.}~\bibnamefont {Brodutch}},\ }\href@noop
  {} {\  (\bibinfo {year} {2019})},\ \Eprint {http://arxiv.org/abs/1905.05775}
  {arXiv:1905.05775 [quant-ph]} \BibitemShut {NoStop}%
\end{thebibliography}
\bibliographystyle{apsrev4-1} 

\newpage 
\null 

\newpage 

\section{\large Supplemental Material}
\label{sec:suppmap}

\subsection{Constructing supercharges in qubit space}

Here we describe an algebraic method for classifying and constructing explicit representations of $\cN=2$ supersymmetry in the space of $N$ qubits. It is known that conjugacy classes of   nilpotent  degree $2$ matrices of size $M\times M$   are in one-to-one correspondence with partitions of $M$ of size up to $2$, i.e., 
\begin{equation}
p(M,2) = (\underbrace{2,\cdots,2}_{n_{2}},\underbrace{1,\cdots,1}_{n_{1}})
\end{equation}
with $2n_{2}+n_{1}=M$, so $n_{2}=\{0,1,\cdots,\frac{M}{2}\}$ and there are $\frac{M}{2}+1$ such partitions. These can be thought of as Young tableau  with a total of $M=2^{N}$ boxes, where each row  consists of either 2 or 1 boxes. For each partition one can define a nilpotent operator, which takes a Jordan normal form with the partition indicating the number of $2\times 2$  and $1\times 1$ Jordan blocks. The partition  $(1,\cdots,1)$ is trivial since the matrix vanishes. Thus, there are a total of $M/2$ nontrivial conjugacy classes. For example, for a two qubit system $M=2^{2}=4$, the nontrivial partitions are $(2,2)$ and $(2,1,1)$ and the two nontrivial nilpotent matrices in Jordan normal form are, respectively, 
\begin{equation}
\tilde \cQ_{1}=
\begin{pmatrix}
 0 &  1 & 0 & 0 \\
 0 & 0 & 0 & 0 \\
0 & 0 & 0 & 1 \\
0 & 0 & 0 & 0
\end{pmatrix}\,,\qquad \tilde  \cQ_{2}=\begin{pmatrix} 0 & 1 &0 &0 \\ 0& 0&0&0 \\ 0& 0&0&0 \\0& 0&0&0\end{pmatrix}\,.
\end{equation}
This gives two conjugacy classes, given by  $\cQ_{i}=S^{-1}\tilde \cQ_{i}S$ with $S$ an invertible matrix. One can see that the $(2,2)$ partition corresponds to the conjugancy class of $\cQ=a_{1}^{\dagger}+a_{2}^{\dagger}$ and the partition $(2,1,1)$ to the conjugancy class of the fermion hard-core model $\cQ=a_{1}^{\dagger}P_{1}+a_{2}^{\dagger}P_{2}$. As discussed in \cite{Witten:1982df}, if $S$ is a unitary matrix the supercharges $\cQ_{i}$ and $\tilde \cQ_{i}$ are physically equivalent, differing only by a change of basis in Hilbert space. For $S$ invertible but non-unitary this conjugation leads to a family of physically inequivalent supercharges. Nonetheless,  there is a one-to-one correspondence between supersymmetric ground states of $\cQ_{i}$ and $\tilde \cQ_{i}$ and thus, in particular, the Witten index is unchanged.  Finally, we note that in this construction of supercharges nilpotency  is ensured but anticommutation with  $(-1)^{F}$ has to be imposed at the end (after conjugating by $S$). In contrast,  in the approach in terms of creation and annihilation operators,  anticommutation with $(-1)^{F}$  is ensured and the nilpotency condition \eqref{nilpQ} is a nontrivial constraint.

\subsection{Proof of Theorem~\ref{thmWI}}

\begin{proof}
To show that the Witten index problem for $\cN=2$ quantum mechanics is $\#\P$-hard, it is enough to show that it is hard in a specific instance. To this end we consider the  fermionic hard-core model introduced in \cite{Fendley:2002sg}. This model consists of fermions on a graph $G$ with a strong repulsive core, forbidding  a fermion to occupy a site $i$ if any of its adjacent sites $j$ is occupied. Thus, a basis for the Hilbert space is given by all subsets of vertices of $G$ such that no two vertices are adjacent, known as  independent sets of $G$. This can be summarized by giving $\cO(N^{2})$ number  constraints on the basis elements, i.e., 
\begin{equation} \label{Hindep}
\cH=\text{span}\{\ket{n_{1},\cdots, n_{N}}\,|\,A_{ij}n_{i}n_{j}=0\}\,,
\end{equation}
where $A_{ij}$ is the incidence matrix of $G$. The supercharge is given by \cite{Fendley:2002sg}
\begin{equation} \label{Qindep}
\cQ=\sum_{i}a_{i}^{\dagger}P_{i}\,,\qquad  P_{i}\equiv\prod_{j\to i }(1-\hat n_{j})
\end{equation}
where the product is over all vertices $j$ adjacent to vertex $i$ and $\hat n_{i}= a_{i}^{\dagger}a_{i}$. The projectors $P_{i}$ ensure that the supercharge preserves the Hilbert space \eqref{Hindep}.

\begin{figure}
\includegraphics{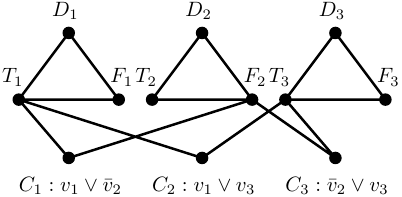}
\caption{Gadget used in \cite{ROUNE2013170} to reduce $\#\SAT$ to the problem of computing the Euler characteristic of the independence complex. To each variable $v_{i}$ there is an associated triangle and for each clause $C_{a}$ there is an associated vertex and these are connected according to whether a variable enters negated or not in the clause (see \cite{ROUNE2013170} for details). In our setting, this corresponds to computing the Witten index of an $\cN=2$ quantum mechanical system with a $6$-local supercharge $\cQ$. }
\label{graphSAT}
\end{figure}

This system defines a particular chain complex known as the independence complex (see e.g., \cite{huijse2009supersymmetry,Huijse:2011aa} and references therein). Thus the problem of computing the Witten index of this system amounts to computing the Euler characteristic of the independence complex. The latter  was shown to be $\#\P$-hard in \cite{ROUNE2013170}, given $G$ as an input, by reduction from $\#$2-$\SAT$ (see Fig.~\ref{graphSAT}). In our setting, instead, the input is the Hilbert space \eqref{Hindep} and supercharge  \eqref{Qindep}. However, since these are  efficiently computable from $G$ it follows that computing the Witten index is $\#\P$-hard. Note that the supercharge \eqref{Qindep} is $(\delta+1)$-local, with $\delta$ the maximum degree of $G$. Although the construction in  \cite{ROUNE2013170} in general leads to a graph with a degree $\delta=\text{poly}(N)$, it is easy to see that with a polynomial amount of work it can be brought into a graph of maximum degree $\delta=5$, resulting in a $6$-local supercharge, while preserving the corresponding Witten index. To see this, note that the degree of the graph in \cite{ROUNE2013170} is given by $\delta=s+2$, with $s$ the maximum number of occurrences of any variable in the clauses of the original $2$-$\SAT$ problem (see Fig.~\ref{graphSAT}). Now, given any instance of $\#2$-$\SAT$ one can reduce the maximum number of appearances of each variable down to $s=3$ by the construction of \cite{TOVEY198485}. Namely, for each variable $x$ which appears in $l> 3$ clauses one introduces $l$ new variables $x_{1},\ldots,x_{l}$, replaces the $l$th occurrence of $x$ with $x_{i}$, $i=1,\ldots, l$, and appends the clauses $(x_{i}\lor \bar x_{i+1})$ for $i=1,\ldots,l-1$ and $(x_{l}\lor  \bar{ x}_{1})$. These force the variables (for each $x$) to be the same, $x_{i}=x$, thus recovering the original instance of $\#\SAT$. However, each variable appears no more than 3 times in the extended set of clauses, which only introduces a polynomial number of new variables. Thus, without loss of generality we can assume $s=3$ and the corresponding maximum degree of the graph is $\delta=5$. The supercharge is thus $6$-local, as claimed. 

What remains to be shown is that the  Witten index problem is in $\#\P$. One can easily show this by adapting an argument in \cite{ROUNE2013170} to our setting. First, one uses the property \eqref{WI} to write the Witten index as the difference in the {\it total} number of bosonic and fermionic states in the Hilbert space, i.e., $\cI=n^{B}-n^{F}$. Although the problem of computing each term, $n^{B}$ and $n^{F}$, separately is in $\#\P$, it does not follow from this that computing the difference is necessarily in $\#\P$. However, using the identity
\begin{equation}
n^{F}+\bar n^{F} = 2^{N-1}\,,
\end{equation}
where $\bar n^{F}$ denotes the number of fermionic states  in $\(\Bbb C^{2}\)^{\otimes {N}}$ but {\it not} in $\cH$, one has
\begin{equation}
\cI+2^{N-1}=n^{B}+\bar n^{F}\,.
\end{equation}
Then, one  considers the decision problem: Given a  Hilbert space $\cH$ defined by a polynomial number of constraints, is there a bosonic state satisfying the constraints or a fermionic state not satisfying the constraints? If so, this is a ``yes'' instance of the problem. This is a problem in $\NP$, with a witness for a ``yes'' instance the corresponding bosonic or fermionic state. Thus, the corresponding counting version of this problem is in $\#\P$ and, since one can subtract $2^{N-1}$ efficiently, the Witten index problem is in $\#\P$. Thus, the Witten index problem is $\#\P$-complete. 

\end{proof}

 As a corollary, no efficient algorithm for finding or counting all bosonic and fermionic ground states of $\cN\leq 2$ systems is expected to exist either; if it did exist the Witten index could then be computed with this information in polynomial time. It also  follows  that  the problem of computing the more general \eqref{indexdef} is $\#\P$-hard. 

\subsection{Generalized Witten index in the hard-core model}

 We now define the problem {\sc Generalized Witten Index-additive} and discuss an algorithm in the case of the hard-core model.  Consider a supersymmetric operator of the form $\cO=e^{\mu J}$ where $J=J^{\dagger}$ is a Hermitian supersymmetric operator, $[\cQ,J]=0$, and $\mu\geq 0$. Note that the supersymmetry algebra implies $[J,H]=0$ and thus $J$ is a symmetry of the system  with $\mu$ the associated chemical potential.  One can define the  generalized Witten index,
\begin{equation}\label{genWA}
Z_{\text{P}}[\mu]\equiv\, \text{Tr}_{\cH}\left[ (-1)^{F}e^{\mu J} \right]\,.
\end{equation}
This function receives contributions only from $\cQ$-closed states and is thus a generating function, counting (with signs) the number of $\cQ$-closed states with a given quantum number with respect to $J$. As already noted, we do not expect the existence of an efficient algorithm, classical or quantum, computing this.  Indeed, the special case $J=H$ corresponds to the Witten index, which is $\#\P$-complete. We therefore focus on developing an approximation and define the following problem:
 \begin{prob}[\sc Generalized Witten Index-additive]
Given a supercharge $\cQ$ acting on $N$ qubits, a Hermitian supersymmetric operator $J$, bounded below by $\lambda$, and two numbers $\epsilon>0$ and $1/2<c<1$, output a number $\hat Z_{\text{P}}$ such that
\begin{equation}\label{probWitten}
\abs{\hat Z_{\text{P}}-\frac{1}{2^{N}e^{\lambda \mu}}Z_{\text{P}}[\mu]} \leq \epsilon \,,
\end{equation}
with probability greater than $c$.
\end{prob}
This is very closely related to the partition function problem defined in \cite{fern2008entanglement}, with some subtle differences. The first is the presence of $(-1)^{F}$, whose role is to impose periodic boundary conditions for fermions (as opposed to the more standard antiperiodic boundary conditions), and the second is that the trace is not carried over the full space of computational states but over the subspace $\cH \subseteq \(\Bbb C^{2}\)^{\otimes {N}}$ which can, in general, be hard to  find. We now show how these differences can be  overcome in the case of the fermion hard-core model, showing that the algorithms of  \cite{chowdhury2019computing} can be easily adapted to providing an additive approximation to \eqref{genWA}. 

Given the fermion hard-core model discussed above,  we wish to compute
\begin{equation}
Z_{\text{P}}[\mu]=\sum_{s\in \{0,1\}^{N}} (-1)^{F} e^{\mu J}\cP\,,
\end{equation}
where $\cP$ is a projector onto independent sets of $G$. Since we are interested in an approximation, the strategy is to relax the strict projector $\cP$ and define the quantity
\begin{equation}\label{Wittengamma}
\xi(\mu;\gamma)\equiv  \sum_{s\in \{0,1\}^{N}} (-1)^F e^{\mu J} e^{-\gamma H_{\mathit Pen}}\,,
\end{equation}
where $ \gamma>0$ is a parameter and  $H_{\mathit Pen}\geq 0$ is a  ``penalty Hamiltonian,'' which is designed so that
\begin{equation}\label{condHpen}
H_{\mathit Pen}\ket{s}= \varepsilon \ket{s}\ \quad \text{with} \quad \begin{cases} \varepsilon=0 \qquad \text{if $\ket{s}\in \cH(G)$} \\ \varepsilon>0 \qquad \text{otherwise} \end{cases}\,.
\end{equation}
This assigns weight 1 to  states respecting the hard-core condition  and exponentially suppresses states violating it.  
Note that we are now allowing for arbitrary states in $\(\Bbb C^{2}\)^{\otimes {N}}$ and we have thus enlarged the space in which computation can be performed, from the reduced confine of  independent sets to the full $2^{N}$-dimensional computational space. States which are not independent sets, however, are penalized and their contribution is exponentially suppressed. The quantity \eqref{Wittengamma} may be seen as a grand canonical partition function,  $\mathcal Z=\sum_{s\in \{0,1\}^{N}} e^{\nu F} e^{-\beta H} $, with $\beta H = \gamma H_{pen}-\mu J$, and complexified value of the chemical potential $\nu=i \pi$. A simple choice for the penalty Hamiltonian is
\begin{equation}\label{Hpensimple}
H_{\mathit Pen}=  \sum_{i=1}^N \sum_{j\to i} \hat n_i \hat n_{j} \,,
\end{equation}
which satisfies \eqref{condHpen} and has the property that it is $\cQ$-closed but not $\cQ$-exact. Using the fact that $J$ has a lower bound $\lambda$, i.e., that $\lambda\leq  \lambda_{i}$ for all eigenvalues $\lambda_{i}$ of $J$ it follows that, to leading order in $N$, 
\begin{equation}
-2^{N}e^{\mu \lambda}e^{-\gamma}\leq \xi(\mu;\gamma)-Z_{\text{P}}[\mu]\leq2^{N}e^{\mu \lambda}e^{-\gamma}\,,
\end{equation}
and choosing $\gamma=\log (2/\epsilon)$, we obtain the inequality
\begin{equation}\label{approxW}
\frac{1}{2^{N}e^{\mu\lambda}}\abs{ \xi(\mu;\gamma)-Z_{\text{P}}[\mu]} \leq \frac{\epsilon}{2}\,.
\end{equation}
Thus, the (complexified) partition function \eqref{Wittengamma} gives an additive approximation to the generalized Witten index, within a window $\epsilon/2$. Now, consider another quantity, $\hat  \xi$, which itself approximates a normalized \eqref{Wittengamma},  to an additive accuracy $\epsilon/2$, namely
\begin{equation}
\abs{\hat  \xi(\mu;\gamma)-\frac{1}{2^{N}e^{\mu\lambda}} \xi(\mu;\gamma)}\leq \frac{\epsilon}{2}\,.
\end{equation}
Combining this with \eqref{approxW} implies
\begin{equation}\abs{\hat  \xi(\mu;\gamma)-\frac{1}{2^{N}e^{\mu\lambda}}Z_{\text{P}}[\mu]}\leq \epsilon\,,
\end{equation}
as required in \eqref{probWitten}. Thus, the problem of obtaining an additive approximation to the generalized Witten index of the hard-core model  is reduced to that of finding an additive approximation to the (complexified) partition function \eqref{Wittengamma}. 
To find such an approximation  one can apply the algorithm of \cite{chowdhury2019computing}. Following this approach, one expands  $ e^{-\beta H}=e^{-(\gamma H_{\mathit Pen}-\mu J)}$  as a sum of unitary operators, to each of which trace estimation is applied, with the only difference that each term in the expansion is multiplied by the additional  unitary operator $(-1)^{F}$. Since this is a very simple operator it leads only to a polynomial overhead. Note also  that $H_{\mathit Pen}$ is a sum of 2-local Hamiltonians which, based on the results in \cite{chowdhury2019computing}, can be dealt with efficiently. The complexity of the algorithm is then dominated by that of approximating  $e^{\mu J}$ which under certain conditions, as the ones detailed in \cite{chowdhury2019computing,fern2008entanglement}, can be made efficient. In the case $\mu=0$ this gives an efficient approximation to the Witten index. However, this provides no advantage over classical algorithms as a classical sampling algorithm can give an estimate to the same accuracy \footnote{The author thanks Chris Cade for discussions on this. }.  More generally, this model  provides a  setting in which the question of  $\mathsf{DQC1}$-hardness for $\mu \neq 0$ may be addressed. Generalizations of this model, in which up to $k$ consecutive vertices may be occupied,  denoted  $M_{k}$,  were introduced in \cite{Fendley_2003}. Interestingly, these describe a  discretization of  $\cN=2$ superconformal minimal models at level $k$ \cite{SciPostPhys.3.1.004}. See \cite{huijse:10} for a review of these models. The observations here may thus provide  new tools for exploring aspects of superconformal field theories.

\subsection{Supersymmetric SYK and holography} A number of supersymmetric models of fermions on a graph have been constructed, starting with \cite{Nicolai:1976xp}. A generalization, which is relevant to the study of holography and black holes, is the supersymmetric version of the Sachdev-Ye-Kitaev (SYK) model \cite{Fu:2016vas}, corresponding to $\cP=\Bbb I$ and 
\begin{equation}
\cQ=i\sum_{i_{1}<i_{2}<\ldots<i_{q}} C_{i_{1}i_{2}\ldots i_{q}} \, a^{\dagger}_{i_{1}}a^{\dagger}_{i_{2}}\ldots a^{\dagger}_{i_{q}}\,,
\end{equation}
where each index takes values $i_{n}\in\{1,\ldots,N\}$  and $C_{i_{1}i_{2}\ldots i_{q}}$ is a totally antisymmetric tensor.  A refined Witten index can be computed exactly in this model, giving
\begin{equation}
\text{Tr}\, \left[ (-1)^{F} g^{r}\right] = e^{i N\pi (\frac{r}{q}-\frac12)}\left[2 \sin \frac{\pi r}{q}\right]^{N}\,,
\end{equation}
where $g=e^{i\pi n/q}$,  $n\in \Bbb N$, is a generator of  a discrete $Z_{q}$  symmetry of the system which commutes with  $\cQ$ and $\cQ^{\dagger}$,  $r\in \Bbb R$ is a chemical potential for this symmetry, and the trace is over all $N$-bit strings. This quantity is computed by setting $\hat U_{S}=g^{r}$ in the algorithm above. An interesting question is whether this quantum computational perspective on SYK and its refined Witten index can lead to further insights into the physics of supersymmetric black holes or AdS$_{2}$ holography (see \cite{Sarosi_2018,Rosenhaus:2018dtp} for reviews).  Although the exact generalized Witten index is known, and thus there is no need for an algorithm, this result may serve to benchmark the performance of quantum computers, perhaps along the lines of similar tests based on the Jones polynomial on IBM Q devices  \cite{gkta2019benchmarking}.

\end{document}